\documentclass[11pt]{article}
\usepackage{fullpage, thm-restate,thmtools}

\usepackage{times}
\usepackage{comment,amsfonts,amssymb,amsmath,amsthm,graphicx,algorithm,algorithmic}
\newcommand{\commentout}[1]{}

\ifx\pdftexversion\undefined
\usepackage[colorlinks,linkcolor=black,filecolor=black,citecolor=black,urlco
lor=black,pdfstartview=FitH]{hyperref}
\else
\usepackage[colorlinks,linkcolor=blue,filecolor=blue,citecolor=blue,urlcolor
=blue,pdfstartview=FitH]{hyperref}
\fi

\newcommand{\alert}[1]{\textbf{\color{red}
[[[#1]]]}\marginpar{\textbf{\color{red}**}}\typeout{ALERT:
\the\inputlineno: #1}}

\newcommand{\E}{{\mathbb{E}}}
\newcommand{\cd}{\cdot}
\newcommand{\tB}{\tilde{B}}

\newcommand{\dist}{{\rm dist}}

\newcommand{\mommit}[1]{}
\newcommand{\namedref}[2]{\hyperref[#2]{#1~\ref*{#2}}}

\newtheorem{theorem}{Theorem}
\newtheorem{lemma}{Lemma}

\newtheorem{remark}{Remark}
\newtheorem{claim}[lemma]{Claim}

\usepackage{pdfsync}
\usepackage{authblk}

\title{Path-Reporting Distance Oracles for Vertex-Labeled Graphs}

\author[1]{Ofer Neiman\thanks{Supported in part by ISF grant No. (970/21).}}
\author[1]{Alon Spector\thanks{Supported in part by the Lynn and William Frankel Center for Computer Sciences, and  ISF grant No. (970/21).}}

\affil[1]{Faculty of Computer and Information Science, Ben-Gurion University of the Negev. Emails: \texttt{neimano@cs.bgu.ac.il, alonspec@post.bgu.ac.il}}

\begin{document}

\maketitle


\begin{abstract}
Let $G=(V,E)$ be a weighted undirected graph, with $n$ vertices. 
A distance oracle is a data structure that can quickly answer distance queries, with some stretch factor. A seminal work of \cite{TZ01}, given an integer $k\ge 1$, provides such an oracle with stretch $2k-1$, query time $O(k)$, and size $O(k\cd n^{1+1/k})$. Furthermore, this oracle can also report a path in $G$ corresponding to the returned distance.

In this paper we focus on vertex-labeled graphs, in which each vertex is given a label from a set $L$ of size $\ell$. 
A {\em vertex-label distance oracle} answers queries of the form $(v,\lambda)$, where $v\in V$ and $\lambda\in L$, by reporting (an approximation to) the distance from $v$ to the closest vertex of label $\lambda$. 
Following \cite{HLWY11}, it was shown in \cite{C12} that for any integer $k> 1$, there exists a vertex-label distance oracle with stretch $4k-5$, query time $O(k)$, and size $O(k\cdot n\cdot \ell^{1/k})$. 

This state-of-the-art result suffers from two main drawbacks: The stretch is roughly a factor of 2 larger than in \cite{TZ01}, and it is not path-reporting. We address these concerns in this work, and provide the following results. 
\begin{itemize}
    \item First, we devise a {\em path-reporting} vertex-label distance oracle, at the cost of a slight increase in stretch and size. For any constant $0<\epsilon<1$, our oracle has stretch $(4k-5)\cdot(1+\epsilon)$, query time $O(k)$, and size  $O(n^{1+o(1)}\cdot \ell^{1/k})$.
    \item Second, we show how to improve the stretch to the optimal $2k-1$, at the cost of mildly increasing the query time. Specifically, we devise a vertex-label distance oracle with stretch $2k-1$, query time $O(\ell^{1/k}\cdot\log n)$, and size $O(k\cdot n\cdot \ell^{1/k})$.
\end{itemize}

\end{abstract}
\newpage
\section{Introduction}
Let $G=(V,E)$ be an undirected graph, with positive weights on the edges $w:E\rightarrow\mathbb{R}_+$. An (approximate) {\em distance oracle} is a data structure that can (approximately) answer distance queries. The oracle has stretch $t$, if for every pair $u,v\in V$, the answer it provides $\tilde{d}(u,v)$ to a given query $(u,v)$ satisfies
\[
\dist(u,v)\le \tilde{d}(u,v)\le t\cdot \dist(u,v)~,
\]
where $\dist(\cdot)$ is the distance in $G$ with respect to the weights. The main interest is the tradeoff between the stretch $t$, the size of the oracle (number of words needed to store it), and the query time. In a seminal work, \cite{TZ01} showed for any $n$-vertex graph and any integer $k\ge 1$, a distance oracle with stretch $2k-1$, size $O(kn^{1+1/k})$ and query time $O(k)$. Subsequently, the size and query time were improved to $O(n^{1+1/k})$ and $O(1)$ respectively, \cite{WN13,C15}. This state-of-the-art tradeoff is  optimal (up to constants), assuming Erd\H{o}s' girth conjecture. 

A useful feature of a distance oracle is its ability to report not just an approximation $\tilde{d}(u,v)$ to the distance, but also provide a path in $G$ between $u,v$ of length $\tilde{d}(u,v)$. An oracle with this property is called {\em path-reporting}. The original construction of \cite{TZ01} provides a path-reporting oracle, albeit the improved one of \cite{C15} does not. Recently there has been a surge of interest in path-reporting distance oracles \cite{ENW16,EP16,ACEFN20,NS24,ES23,CT24}, most of these were interested in obtaining (almost) linear size, as the \cite{TZ01} oracles always have size $\Omega(n\log n)$. 

When stating the query time of path-reporting oracles, we will follow the convention of omitting the length of the returned path. That is, if the algorithm returns a path $P$ in time $O(q+|P|)$, we will write that the query time is $O(q)$ (since reporting $P$ will always incur time $O(|P|)$).

\subsection{Vertex-Label Distance Oracles}

In this paper we focus on distance oracles for {\em vertex-labeled} graphs, introduced by \cite{HLWY11}. In this setting, there is a set $L=\{\lambda_1,\dots,\lambda_\ell\}$ of labels, and each vertex receives a single label from $L$. We would like to answer {\em vertex-label} queries, which are queries of the form $(v,\lambda)$, for $v\in V$ and $\lambda\in L$, and the goal is to return an approximation for $\dist(v,\lambda)$, which is the distance from $v$ to the nearest vertex of label $\lambda$.

Vertex-label queries arise naturally in various applications, where there are multiple vertices that can provide a service to the source vertex $v$. For instance, in a network, the labels may indicate different types of servers that provide certain functionality, or in a road network, the labels may correspond to services such as gas, food, accommodation, etc. We would like to be able to answer queries such as ``how close is the nearest gas station'', rather than the distance to a particular one. 

In \cite{HLWY11} it was shown that the distance oracles of \cite{TZ01} can be adapted to the vertex-label query setting, with stretch $4k-5$ and size  $O(kn^{1+1/k})$, regardless of the number of labels, $\ell$. However, if $\ell< n^{1/k}$, then this size is larger than the trivial solution with stretch 1 and size $O(n\cdot\ell)$, that simply stores all distances between vertices and labels.
To address this issue, Hermelin et al.~\cite{HLWY11} devised an oracle with improved size of $O(kn\cdot\ell^{1/k})$, at the cost of increasing the stretch to $2^k-1$. In \cite{C12}, the stretch was substantially improved to $4k-5$. Both results have query time $O(k)$.

The current state-of-the-art leaves two main concerns. The first is that unlike standard distance oracles, the vertex-label oracles of \cite{HLWY11,C12} of size $\approx n\cdot\ell^{1/k}$ are not {\em path-reporting}. Indeed, for many applications the ability to report paths seems appealing (for instance, rather than obtaining just the distance to the nearest gas station, we would like to receive a path to it). The second concern is that the stretch is roughly a factor of 2 away from the conjectured optimum.

\subsection{Our Results}


In this paper we provide a positive answer to both concerns mentioned above. That is, by slightly increasing the stretch, we show how to obtain path-reporting vertex-label distance oracles. In addition, we show a construction that achieves the optimal stretch of $2k-1$, at the cost of mildly increasing the query time.

Specifically, for any $n$-vertex graph, given an integer $k>1$ and constant $0<\epsilon<1$, we devise a path-reporting distance oracle, that answers vertex-label queries in $O(k)$ time, and has stretch $(4k-5)\cdot(1+\epsilon)$ and size  $O(n^{1+o(1)}\cdot \ell^{1/k})$. (Alternatively, with stretch $O(k)$ and size $\tilde{O}(n\cdot \ell^{1/k})$.) \footnote{When writing $\tilde{O}(f(n))$ we mean $O(f(n)\cdot\log^{O(1)}(f(n)))$.}

Our second result achieves for any integer $k>1$, a vertex-label distance oracle with stretch $2k-1$, size $O(k\cdot n\cdot \ell^{1/k})$, and query time $O(\ell^{1/k}\cdot\log n)$. (This oracle can also be made path-reporting, at the cost of increasing the stretch by a factor of $1+\epsilon$, and the size by a factor of $n^{o(1)}$.)

\subsection{Overview of Techniques}

Our construction of the path-reporting vertex-label distance oracle starts similarly to previous works \cite{HLWY11,C12}. Given the integer parameter $k>1$, we sample a sequence of sets $V=A_0\supseteq A_1\supseteq...\supseteq A_{k-1}\supseteq A_k=\emptyset$, as follows. For each $0\le i\le k-2$, every vertex in $A_i$ is independently sampled to $A_{i+1}$ with probability $\ell^{-1/k}$ (recall that $\ell=|L|$ is the number of labels). This sampling probability is different from the $n^{-1/k}$ used in \cite{TZ01}, which allows the size improvement from $\approx n^{1+1/k}$ to $\approx n\cdot\ell^{1/k}$. However, the main issue is that the last set $A_{k-1}$ is expected to be prohibitively large, $\frac{n}{\ell^{1-1/k}}$.

As in \cite{TZ01}, for each $v\in V$ and $0\le i\le k-1$ define {\em pivots}: $p_i(v)$ is the nearest vertex of $A_i$ to $v$, and {\em bunches}: $B_i(v)$ contains all the vertices in $A_i$ closer to $v$ than $p_{i+1}(v)$, and $B(v)=\bigcup_iB_i(v)$. The {\em cluster} of $u\in V$, $C(u)$, consists of all vertices $v$ such that $u\in B(v)$. 

\paragraph{Path-reporting construction.} The query algorithm of \cite{C12}, given the query $(v,\lambda)$, finds a pivot $p_i(v)$ such that there exists some vertex $u$ of label $\lambda$ with $p_i(v)\in B(u)$. It then returns the sum of distances from $u,v$ to this pivot $p_i(v)$. We take a similar approach, and as long as $i<k-1$, we can apply the technique of \cite{TZ01}, using clusters, to find a path from $v$ to $u$ via the cluster of $p_i(v)$. The issue arises when $i=k-1$, then the clusters of vertices in this last level $A_{k-1}$ are too large to store. To overcome this hurdle, we apply a {\em pairwise} path-reporting distance oracle, that approximately preserves distances (and reports paths), only between a given set of pairs. Indeed, we show that the set of pairs containing all the possible required connections for vertices in $A_{k-1}$ is sufficiently small, and the pairwise oracles of \cite{ES23,NS24} induce a slight increase in stretch and size.

\paragraph{Improved stretch.}
The query algorithm mentioned above, given a query $(v,\lambda)$, tests only the pivots of $v$. This {\em one-sided testing} enables stretch of $4k-5$, similarly to the context of compact routing schemes \cite{TZ01a}, where we have access to the bunches of the source $v$, but not to the bunches of the destination $u$. In order to obtain the (conjectured) optimal stretch of $2k-1$ for vertex-label queries, we need {\em two-sided testing}, as in \cite{TZ01,C15}, where we also inspect whether the pivots of the destination $u$ are in $B(v)$. Alas, the destination $u$ is not known, we are only given a label $\lambda$. 
In order to achieve two-sided testing, it seems that in the $i$-th iteration, one must check for each vertex $u$ of label $\lambda$, whether $p_i(u)\in B(v)$. Unfortunately, this may result in a linear query time. To deal with this barrier, in the pre-processing phase, for each $\lambda\in L$, we store all the level $i$ pivots of vertices with label $\lambda$, denoted $P_i(\lambda)$, in a hash table. Then, we can check if there exists a vertex of $B_i(v)$ in $P_i(\lambda)$. This will be much faster, since the size of $B_i(v)$, for $i<k-1$, is bounded by $\approx\ell^{1/k}$. As the last set of level $k-1$ can be quite large, we will need to treat it differently, which leads to an intricate case analysis when bounding the stretch.


\subsection{Related Work}

In the context of planar graphs, \cite{LMN13} showed a vertex-label distance oracle with stretch $1+\epsilon$, size $O(n\log n)$ and query time $O(\log n\cdot\log\Delta)$, where $0<\epsilon<1$ is fixed and $\Delta$ is the hop-diameter of the graph. This result is somewhat inferior to state-of-the-art distance oracles for planar graph, where \cite{LW21} obtained stretch $1+\epsilon$, size $O(n)$ and query time $O(1)$. In \cite{EFW21}, a stretch 1 vertex-label distance oracle for planar graphs is shown, with parameters (size and query) that are at most polylogarithmically larger than the standard setting.

A closely related problem is the so-called {\em colored distance oracles}, which can answer label-to-label queries. These objects were studied by \cite{KK16,HK25}, in the context of detecting patterns in texts, focusing on the $1+\epsilon$ stretch regime.

\section{Preliminaries}
    Let $G=(V,E)$ be an undirected graph with nonnegative weights on the edges $w:E\rightarrow \mathbb{R}_+$. Given a set $L$ of labels, each vertex is associated with a single label $\lambda\in L$. For $\lambda\in L$, the set of vertices of label $\lambda$ is denoted by $V_\lambda$.
	
    For vertices $u,v\in V$, define \( \text{dist}(u, v) \) as the length of the shortest path between $u,v$ with respect to the weights. If $H$ is a subgraph of $G$, we denote $\text{dist}_H(u, v)$ the distance in $H$.
	For a vertex \( v \in V \) and a set \( A \subseteq V \), define
	$\text{dist}(v, A) = \min_{u \in A} \{\text{dist}(v, u)\}$.
	Additionally, let \( \text{dist}(v, \emptyset) = \infty \).
    
	For every vertex \( v \in V \) and label \( \lambda \in L \), we define \( \lambda(v) \) as the closest node to \( v \) that has the label \( \lambda \), that is, $\text{dist}(v, \lambda(v))=\text{dist}(v, V_\lambda)$.\\

\paragraph{Pairwise distance oracles.} Given a set of pairs ${\cal P}\subseteq V\times V$, a {\em pairwise} distance oracle is required to (approximately) answer distance queries only for pairs in ${\cal P}$. A pairwise oracle is called {\em path-reporting} if it can also return a path in $G$ that achieves the approximated distance. We will use two results: the first from \cite{ES23} that has stretch close to 1, and the second from \cite{NS24} that has constant stretch, but smaller size. By plugging in $k=\log n$ in \cite[Theorem 2]{ES23} we get\footnote{Path-reporting oracles are a part of {\em interactive spanners} in \cite{ES23}.}

    \begin{theorem}[\cite{ES23}]\label{thm:prdo23}
    Given an undirected weighted $n$-vertex graph $G = (V, E)$, a parameter
$0<\epsilon \le 1$, and a set of pairs ${\cal P} \subseteq V\times V$, there exists a pairwise path-reporting distance oracle, with stretch $1+\epsilon$,
query time $O(1)$ and size
\[
O\left(|{\cal P}|\cdot \left(\frac{\log\log n}{\epsilon}\right)^{\log_{4/3}\log n}+n\log\log n\right)~.
\]
    \end{theorem}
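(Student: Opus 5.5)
This statement is quoted from \cite{ES23}, so the proof amounts to instantiating \cite[Theorem 2]{ES23} with the right parameter; no new construction is needed. That theorem, for a general integer parameter $k\ge 1$ and $0<\epsilon\le 1$, gives a pairwise path-reporting oracle (an interactive spanner) with stretch $1+\epsilon$ and query time $O(1)$. Its size has two terms: a pair-dependent term $|{\cal P}|\cdot\left(O(\log k)/\epsilon\right)^{\log_{4/3}k}$, which comes from storing near-exact hopsets or emulator paths per pair, and a vertex-dependent term of roughly $n^{1+1/k}$ times a lower-order factor, which comes from the underlying sparse emulator or spanner structure.

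First, I set $k=\log n$. Then $\log k=\log\log n$, so the pair term becomes $|{\cal P}|\cdot\left(\frac{\log\log n}{\epsilon}\right)^{\log_{4/3}\log n}$, absorbing constants into the $O(\cdot)$. The constant inside $O(\log k)$ contributes a factor $c^{\log_{4/3}\log n}$. I would handle it either by relying on the fact that \cite{ES23} states the bound with the $O(\cdot)$ outside, or by rescaling $\epsilon$ by a constant.

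Second, with $k=\log n$ the vertex term becomes $n^{1+1/\log n}=2n=O(n)$. What remains is the extra factor, which in \cite{ES23} is $\log k$ or similar. This gives exactly the $n\log\log n$ term.

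Finally, stretch $1+\epsilon$ and query time $O(1)$, excluding path length per the paper's convention, carry over unchanged, since they do not depend on $k$.

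The only real obstacle is bookkeeping. I need to check that the exact form of \cite[Theorem 2]{ES23} has the exponent $\log_{4/3}k$ and the additive term $n^{1+1/k}\log k$, and that the hidden constants do not turn into polynomial factors when $k=\log n$. If the exponent's base constant were raised to a power $\log_{4/3}\log n$, that would only cost a $\mathrm{polylog}(n)$ factor. That factor can be absorbed into the $(\log\log n/\epsilon)^{\log_{4/3}\log n}$ term by slightly adjusting $\epsilon$.
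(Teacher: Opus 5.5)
Your proposal matches the paper, which likewise obtains this statement by plugging $k=\log n$ into \cite[Theorem 2]{ES23}, so that $n^{1+1/k}=2n$ and $\log k=\log\log n$ give exactly the stated terms. One caveat: your fallback of ``rescaling $\epsilon$'' does not recover the literal statement. A factor $c^{\log_{4/3}\log n}=(\log n)^{\log_{4/3}c}$ cannot be hidden in the $O(\cdot)$, and shifting it into $\epsilon$ turns the stretch into $1+c\epsilon$. You don't need it, though, since the substitution is direct, and later the paper only uses the bound in the form $(\log\log n/\epsilon)^{O(\log\log n)}$.
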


    By plugging in a constant $c>1$ and $k=\log n$ in \cite[Theorem 13]{NS24} we get\footnote{Path-reporting oracles are a part of {\em path-reporting spanners} in \cite{NS24}.}
     \begin{theorem}[\cite{NS24}]\label{thm:prdo24}
    Given an undirected weighted $n$-vertex graph $G = (V, E)$ and a set of pairs ${\cal P} \subseteq V\times V$, there exists a pairwise path-reporting distance oracle, with stretch $O(1)$, query time $O(1)$ and size
\[
\tilde{O}\left(|{\cal P}|+n\right)~.
\]
    \end{theorem}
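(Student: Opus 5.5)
Theorem~\ref{thm:prdo24} is not proved from scratch here. It is obtained by specializing the general tradeoff of \cite[Theorem 13]{NS24}, in the same way that Theorem~\ref{thm:prdo23} was read off from \cite[Theorem 2]{ES23}. So the plan is to substitute the parameters and simplify each of the three quantities: stretch, query time and size.

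\textbf{Step 1: recall the general form.} As I understand it (I am reconstructing its shape, not its exact constants), \cite[Theorem 13]{NS24} takes an integer $k\ge 1$ and a stretch-controlling parameter $c>1$. For every set of pairs ${\cal P}$ it provides a path-reporting pairwise oracle with:
\begin{itemize}
\item stretch depending only on $c$ (and possibly polynomially on $k$ through the underlying Thorup--Zwick style hierarchy);
\item query time $O(1)$, excluding the length of the returned path, following the paper's convention;
\item size of the form $|{\cal P}|$ times a factor polylogarithmic in $n$, plus a term of the type $O(k\cdot n^{1+1/k})$ coming from the underlying cluster/bunch structure.
\end{itemize}

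\textbf{Step 2: substitute $k=\log n$.} Then $n^{1/k}=2$, so the hierarchy term becomes $O(n\log n)=\tilde{O}(n)$. Any factor depending on $k$ multiplying $|{\cal P}|$ becomes $\log^{O(1)}n$, provided it depends on $k$ only polynomially when $c$ is fixed. Together these give the size bound $\tilde{O}(|{\cal P}|+n)$.

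\textbf{Step 3: fix $c$ to a constant.} Taking $c>1$ to be a constant turns the stretch into $O(1)$. Query time is unaffected by the parameter choice, so it stays $O(1)$.

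\textbf{Main obstacle.} The crux is checking that, at $k=\log n$, the stretch in \cite[Theorem 13]{NS24} does not grow with $k$ (for instance, that it is not $2k-1$), and that the $k$-dependence in the size is only polylogarithmic. If the stretch in that theorem did grow with $k$, I would instead take the largest $k$ for which the stretch stays $O(1)$. I would then absorb the resulting $n^{1/k}$ factor into the $\tilde{O}$ only if it is polylogarithmic. This is exactly the point that must be checked against the precise statement of \cite[Theorem 13]{NS24}.
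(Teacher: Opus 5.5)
Your proposal follows the paper's route exactly: the paper obtains this theorem in one line by plugging a constant $c>1$ and $k=\log n$ into \cite[Theorem 13]{NS24}, with no further argument. The point you flag as the main obstacle is taken directly from the cited statement rather than checked separately: namely, that the stretch depends only on $c$ and not on $k$, and that the size becomes $\tilde{O}(|{\cal P}|+n)$ at $k=\log n$. So your fallback choice of $k$ is never needed.
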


\section{Path-Reporting Oracles for Labeled Graphs}\label{sec:PathReporting}

    In this section we prove the following theorem.
    \begin{theorem}
        For any $n$-vertex graph $G=(V,E)$ labeled by a set $L$ of $\ell$ labels, any integer $k> 1$ and constant $0<\epsilon<1$, there exists a path-reporting distance oracle that can answer any vertex-label query in time $O(k)$, with stretch $(4k-5)\cdot(1+\epsilon)$ and size $O(n^{1+o(1)}\cdot \ell^{1/k})$. (Alternatively, with stretch $O(k)$ and size $\tilde{O}(n\cdot \ell^{1/k})$.)
    \end{theorem}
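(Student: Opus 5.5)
We follow the plan from the overview. First sample $V=A_0\supseteq A_1\supseteq\dots\supseteq A_{k-1}\supseteq A_k=\emptyset$, moving each vertex of $A_i$ to $A_{i+1}$ independently with probability $\ell^{-1/k}$ for $0\le i\le k-2$. Define pivots $p_i(v)$, bunches $B_i(v)$, $B(v)$ and clusters $C(u)$ as in \cite{TZ01}. The standard geometric argument gives $\E|B_i(v)|\le \ell^{1/k}$ for $i<k-1$. For the last level, $B_{k-1}(v)=A_{k-1}$ is too large, and we avoid storing it.

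For each $w\in A_i$ with $i<k-1$, and each label $\lambda$ such that some $u\in V_\lambda$ has $w\in B(u)$, we store in a hash table at $w$ the value $d_\lambda(w)=\min\{\dist(w,u): u\in V_\lambda, w\in B(u)\}$ together with the minimizing $u$. For every vertex $v$ we store its pivots, its bunch $B(v)$ with distances, and the \cite{TZ01} cluster trees. Each cluster $C(w)$ is stored as a shortest-path tree rooted at $w$; for each $x\in C(w)$ it is a subtree of the shortest path tree, by the usual property that clusters are closed under prefixes of shortest paths to the center. These trees let us report a path from any $x\in C(w)$ to $w$ in time linear in its length. The number of (center, label) entries is at most $\sum_u|B(u)|$, and the cluster trees have total size $\sum_u |B(u)|$, so for levels $<k-1$ the size is $O(k n\ell^{1/k})$.

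For level $k-1$, each $v$ and $\lambda$ sees the same candidates. The nontrivial connections are therefore pairs $(v,p_{k-1}(v))$ for all $v$, and pairs $(w,u_\lambda(w))$ for $w\in A_{k-1}$, $\lambda\in L$, where $u_\lambda(w)$ is the nearest $\lambda$-vertex to $w$. Since $\E|A_{k-1}|=n\ell^{-(k-1)/k}$, this pair set ${\cal P}$ has expected size $O(n+|A_{k-1}|\cdot\ell)=O(n\ell^{1/k})$. We apply Theorem~\ref{thm:prdo23} to ${\cal P}$. Its overhead factor $(\log\log n/\epsilon)^{\log_{4/3}\log n}=n^{o(1)}$ for constant $\epsilon$ gives size $O(n^{1+o(1)}\ell^{1/k})$. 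Theorem~\ref{thm:prdo24} gives the alternative with size $\tilde O(n\ell^{1/k})$ and stretch $O(1)$ per segment.

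The query on $(v,\lambda)$ mirrors \cite{C12}. For $i=0,1,\dots,k-2$ we check whether $p_i(v)$ has an entry for $\lambda$. At the first success, with $w=p_i(v)$ and witness $u$, we return $\dist(v,w)+d_\lambda(w)$. The path is obtained by concatenating the tree path from $v$ to $w$ inside $C(w)$ with the tree path from $u$ to $w$ inside $C(w)$. The vertex $v$ lies in $C(w)$ because $w=p_i(v)$ is strictly closer than $p_{i+1}(v)$, or, under ties, by the convention that $p_i(v)\in B(v)$. If all $k-1$ checks fail, we use $w=p_{k-1}(v)$ and concatenate the pairwise-oracle paths for $(v,w)$ and $(w,u_\lambda(w))$. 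The query time is $O(k)$ plus the path length.

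The main obstacle is the stretch analysis, and in particular making the $(1+\epsilon)$ loss only multiplicative. Let $d=\dist(v,\lambda)$ and $u^*=\lambda(v)$. By the \cite{C12} induction, if the check at level $i$ fails, then $p_{i+1}(u^*)$ satisfies $\dist(u^*,p_{i+1}(u^*))\le \dist(u^*,p_i(v))$. This gives $\dist(v,p_{i+1}(v))\le \dist(v,p_i(v))+2d$, and hence the invariant $\dist(v,p_i(v))\le (2i-1)d$ with a refined base case. If the last level is reached, $\dist(v,w)\le(2k-3)d$. The returned value is then at most $(1+\epsilon)\bigl(\dist(v,w)+\dist(w,u_\lambda(w))\bigr)\le(1+\epsilon)\bigl(2\dist(v,w)+d\bigr)\le(1+\epsilon)(4k-5)d$. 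I would verify the base step and the bound $(2i-1)d$ carefully by reproducing the argument of \cite{C12}, and check that the distances stored for the cluster-based levels are exact.
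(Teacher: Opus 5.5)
\textbf{Main gap: the first-success query rule.} Your query stops at the first level $i\le k-2$ where $w=p_i(v)$ has a $\lambda$-entry. But the stored $d_\lambda(w)$ minimizes only over $\lambda$-vertices $x$ with $w\in B(x)$, which you need in order to report the path inside $T(w)$. Your stretch analysis covers only the case where the last level is reached. If the first success happens at some $i\le k-2$ with $w\notin B(u^*)$, it can be triggered by a far-away vertex, and nothing bounds $d_\lambda(w)$. The estimate $d_\lambda(w)\le \dist(w,v)+d$ would hold for the unrestricted distance $\dist(w,V_\lambda)$, but then the corresponding path is not available in $T(w)$.

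Here is a counterexample. Take $k=2$, a path on $x,v,a,u^*$ (in this order) with edge weights $D,d/2,d/2$, $A_1=\{a\}$, and $V_\lambda=\{x,u^*\}$. Then:
\begin{itemize}
\item $v\in B_0(x)$, since $\dist(x,v)=D<D+d/2=\dist(x,A_1)$;
\item $v\notin B_0(u^*)$, since $d\ge d/2$;
\item so the level-$0$ check succeeds with witness $x$, and you return $D$, whereas $\dist(v,\lambda)=d$.
\end{itemize}
The stretch is unbounded. Here also $v\notin C(\lambda)$, so adding a level-$0$ cluster test does not help.

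The fix, which is what the paper does, is to compute a candidate $D_i=\dist(v,p_i(v))+d_\lambda(p_i(v))$ for every $i\le k-2$, plus the pairwise-oracle candidate $D_{k-1}$ through $p_{k-1}(v)$, and return the minimum. The analysis is then driven by the least $i$ with $p_i(v)\in B(u^*)$, not by which checks fail.
\begin{itemize}
\item If $p_i(v)\notin A_{k-1}$, then $u^*$ is itself an eligible witness, so $D_i\le \dist(v,p_i(v))+\dist(p_i(v),u^*)$.
\item Otherwise $p_i(v)=p_{k-1}(v)$ by the tie-breaking rule, and $D_{k-1}$ satisfies the same bound up to the factor $1+\epsilon$.
\end{itemize}
In both cases Lemma~\ref{lem:4k-3} gives $4k-3$.

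\textbf{Second gap: the ``refined base case''.} The bound $\dist(v,p_1(v))\le d$ does not hold for your algorithm as stated. Take $k=2$, edges $vu^*$ of weight $d$, $u^*a$ of weight $d-\delta$, and $vb$ of weight $2d-2\delta$, with $A_1=\{a,b\}$ and $V_\lambda=\{u^*\}$. Then $p_1(v)=b$ and the level-$0$ check fails. You return at least $(2d-2\delta)+(3d-2\delta)=5d-4\delta$, which exceeds $3(1+\epsilon)d$ for small $\epsilon$ and $\delta$.

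To reach $4k-5$ you need a two-sided test at level $0$:
\begin{itemize}
\item store $C(\lambda)=\bigcup_{u\in V_\lambda\setminus A_1}C(u)$, each vertex together with its nearest witness;
\item if $v\in C(\lambda)$, return the exact tree path, since that witness is then $\lambda(v)$;
\item otherwise $\dist(v,p_1(v))\le d$ holds, which saves the additive $2$.
\end{itemize}
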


    \begin{remark}
    The parameter $\epsilon$ can be nonconstant, in which case the size increases by a factor of $\epsilon^{-O(\log\log n)}$.
    The term $n^{o(1)}$ is in fact only sightly super-polylogarithmic: $(\log\log n)^{O(\log\log n)}$. 
    \end{remark}

\subsection{Preprocessing}
Define a random sequence of sets: $V=A_0\supseteq A_1\supseteq\dots\supseteq A_k=\emptyset$, where for each $0\le i\le k-2$, every vertex of $A_i$ is selected for $A_{i+1}$ independently with probability $\ell^{-1/k}$. For each vertex $v\in V$ and $0\le i\le k-1$ define the $i$-th \emph{pivot} $p_i(v)\in A_i$ as the nearest vertex\footnote{\label{foot:note}
If there are ties, we use a consistent tie-breaking scheme: first, we prefer to take the vertex belonging to the highest set. If 
$\text{dist}(v,A_i) = \text{dist}(v, A_{i+1})$, then $p_i(v) \leftarrow p_{i+1}(v)$; 
Otherwise, we choose arbitrarily from the set 
$\{ u \in A_i \mid \text{dist}(v,A_i) = \text{dist}(v, u) \}$.
} in $A_i$ to $v$, i.e.
\[\dist(v,A_i)=\dist(v,p_i(v))~.\]

Define the $i$-th {\em bunch} of $v\in V$ by
	\[
	B_i(v) = \{ u \in A_i  ~:~ \text{dist}(v, u) < \text{dist}(v, A_{i+1}) \}~.
	\]
Note that the distance to $A_k=\emptyset$ is $\infty$, so that $B_{k-1}(v)=A_{k-1}$ for every $v\in V$. Let
\[
	B(v) = \bigcup_{i=0}^{k-1} B_i(v)~.
\]

Additionally, we define \( \widetilde{B}(v) \) as the bunch excluding the last level, that is,
	\[
	\widetilde{B}(v) = \bigcup_{i=0}^{k-2} B_i(v)=B(v)\setminus A_{k-1}.
	\]

The definition of bunches extends naturally to labels, by defining for any $\lambda\in L$,
\[
B(\lambda)=\bigcup_{v\in V_\lambda}B(v)~,
\]
and similarly define $\widetilde{B}(\lambda)=B(\lambda)\setminus A_{k-1}$. For every $u\in \widetilde{B}(\lambda)$, let 
\[\lambda_B(u)=\text{argmin}_{x\in V_\lambda: u\in \widetilde{B}(x)}\{\dist(u,x)\}\]
in other words, this is the closest vertex $x$ to $u$ that caused $u$ to enter $\widetilde{B}(\lambda)$; so $x$ has label $\lambda$ and also satisfies $u\in \widetilde{B}(x)$. 

\paragraph{Clusters.} Similarly to \cite{TZ01}, we use the ``inverses'' of bunches, called {\em clusters}. For any $0\le i\le k-1$ and $u\in A_i\setminus A_{i+1}$, define
\[
C(u)=\{v\in V~:~ \text{dist}(u,v)<\text{dist}(v,A_{i+1})\}~.
\]
Note that $v\in C(u)$ iff $u\in B(v)$. It was shown in \cite{TZ01} that one can efficiently compute a tree $T(u)$ rooted at $u$ that spans $C(u)$, so that $\text{dist}_{T(u)}(u,v)=\text{dist}(u,v)$ for every $v\in C(u)$.

As pointed out in the introduction, the last set $A_{k-1}$ is expected to be excessively large, of size $\frac{n}{\ell^{1-1/k}}$. So unlike \cite{TZ01} we cannot use clusters to derive paths for vertices in that set.
Indeed, the main difference between previous works and our construction, is that we employ a path-reporting pairwise distance oracle for vertices in the last set $A_{k-1}$. Specifically, we define a set of required pairs ${\cal P}$ as follows: For every $v\in V$ we add the pair $(v,p_{k-1}(v))$ to ${\cal P}$, and for every $u\in A_{k-1}$ and every label $\lambda\in L$, we add the pair $(u,\lambda(u))$ to ${\cal P}$ (recalling that $\lambda(u)$ is the nearest vertex to $u$ that has label $\lambda$).
That is, define \[\mathcal{P} = \{(v, p_{k-1}(v)) : v \in V\} \cup\{(u, \lambda(u)) : u \in A_{k-1}, \; \lambda \in L \} ~,\]
and let $O_{\cal P}$ be the pairwise path-reporting distance oracle from Theorem~\ref{thm:prdo23} (respectively, from Theorem~\ref{thm:prdo24}) for the set of pairs ${\cal P}$.

The following claims will be useful:
\begin{claim}\label{claim:v in B}
		For all $v\in V$, and for all $0\leq i\leq k-1$, $p_i(v)\in B(v)$.
	\end{claim}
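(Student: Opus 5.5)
The plan is to show directly that $p_i(v)\in B_i(v)$ for each $0\le i\le k-1$. Since $B_i(v)\subseteq B(v)$, this gives the claim. By definition $p_i(v)\in A_i$, so it remains to verify the strict inequality $\dist(v,p_i(v))<\dist(v,A_{i+1})$.

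\emph{Case $i=k-1$.} Here $A_k=\emptyset$, so $\dist(v,A_k)=\infty$. Hence $B_{k-1}(v)=A_{k-1}$, which contains $p_{k-1}(v)$. Implicitly this uses that $A_{k-1}\neq\emptyset$ so the pivot exists; if $A_{k-1}$ were empty there would be no pivot and nothing to prove.

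\emph{Case $i<k-1$.} Since $A_{i+1}\subseteq A_i$, we have $\dist(v,A_i)\le\dist(v,A_{i+1})$. If the inequality is strict, then $\dist(v,p_i(v))=\dist(v,A_i)<\dist(v,A_{i+1})$, so $p_i(v)\in B_i(v)$.

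The main obstacle is the tie case $\dist(v,A_i)=\dist(v,A_{i+1})$, where strict membership in $B_i(v)$ fails. Here I will use the tie-breaking rule of footnote~\ref{foot:note}, which sets $p_i(v)=p_{i+1}(v)$. I then argue by induction on $i$ downward from $k-1$ that $p_i(v)\in B(v)$.

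Let $j>i$ be the largest index with $\dist(v,A_j)=\dist(v,A_i)$. Such a $j\le k-1$ exists because $\dist(v,A_k)=\infty$ exceeds the finite $\dist(v,A_i)$. By repeated application of the tie-breaking rule, $p_i(v)=p_{i+1}(v)=\dots=p_j(v)$. At level $j$, either $j=k-1$, or $\dist(v,A_j)<\dist(v,A_{j+1})$ by maximality of $j$. In both cases the previous arguments give $p_j(v)\in B_j(v)$. Therefore $p_i(v)=p_j(v)\in B_j(v)\subseteq B(v)$.

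Note that in the tie case the claim only asserts membership in $B(v)$, not in $B_i(v)$, and that is exactly what this argument delivers.
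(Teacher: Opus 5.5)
Your proof is correct and is essentially the paper's argument. The paper runs a downward induction from $k-1$, using the tie-breaking rule $p_i(v)=p_{i+1}(v)\in B(v)$ in the tie case and $p_i(v)\in B_i(v)$ otherwise; your choice of the maximal tied index $j$, with $p_i(v)=p_j(v)\in B_j(v)$, is that same induction unrolled.
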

	\begin{proof}
		We prove the claim by induction on \( i \), proceeding downward. The base case is for \( i = k - 1 \), which holds since \( p_{k-1}(v) \in A_{k-1} \subseteq B(v) \) for every \( v \in V \).
		Now, assume the claim holds for \( i+1 \le k-1 \), so that \( p_{i+1}(v) \in B(v) \). If $\dist(v,A_i)=\dist(v,A_{i+1})$, then as we chose pivot by a consistent tie breaking scheme, preferring vertices from higher sets, it follows that \( p_i(v) = p_{i+1}(v) \in B(v) \). Otherwise, \( \dist(v,p_i(v)) = \dist(v,A_i) < \dist(v,A_{i+1}) \), which implies \( p_i(v) \in B_i(v)\subseteq B(v)\).
	\end{proof}
Next, we record some known facts about the expected size of the sets $A_i$ and of the bunches $\widetilde{B}(v)$. For $0\le i\le k-1$, to be in $A_i$, a vertex needs to be sampled $i$ times. Since each sample is done independently with probability $\ell^{-1/k}$, we have that
\begin{equation}\label{eq:ai}
\E[|A_i|]=\frac{n}{\ell^{i/k}}~.
\end{equation}

\begin{restatable}{claim}{bunchsize}
\label{claim:bsize}
For all $v\in V$,  $\E[|\tB(v)|]\leq(k-1)\cdot\ell^{1/k}$. Furthermore, with high probability it holds that for all $v\in V$, $|\tB(v)|=O(\ell^{1/k}\cdot\log n)$.
\end{restatable}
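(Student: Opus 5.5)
We follow the standard argument of \cite{TZ01}, adapted to sampling probability $\ell^{-1/k}$. Since $\tB(v)=\bigcup_{i=0}^{k-2}B_i(v)$, linearity of expectation reduces the first statement to showing $\E[|B_i(v)|]\le \ell^{1/k}$ for every fixed $0\le i\le k-2$. Note that $i\le k-2$ ensures $A_{i+1}$ is obtained from $A_i$ by genuine independent sampling with probability $q:=\ell^{-1/k}$. The last level, where no such sampling occurs, is exactly what is excluded from $\tB$.

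For the per-level bound, we condition on the set $A_i$ (and hence on all randomness up to level $i$). We order the vertices of $A_i$ by increasing distance from $v$ as $u_1,u_2,\dots$, breaking ties consistently. A vertex $u_j$ lies in $B_i(v)$ only if $\dist(v,u_j)<\dist(v,A_{i+1})$. This requires that none of $u_1,\dots,u_j$ was selected into $A_{i+1}$: if some $u_{j'}$ with $j'\le j$ were in $A_{i+1}$, then $\dist(v,A_{i+1})\le \dist(v,u_{j'})\le\dist(v,u_j)$. Hence $|B_i(v)|$ is at most the number of vertices scanned before the first success in independent Bernoulli$(q)$ trials, truncated at $|A_i|$. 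Its expectation is at most $\sum_{j\ge1}(1-q)^{j-1}=1/q=\ell^{1/k}$. Removing the conditioning and summing over the $k-1$ levels gives $\E[|\tB(v)|]\le (k-1)\ell^{1/k}$.

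For the high-probability statement, we use the same geometric domination. For fixed $v$ and $i\le k-2$,
\[
\Pr\bigl[|B_i(v)|\ge t\bigr]\le (1-q)^{t}\le e^{-tq}.
\]
Choosing $t=c\,\ell^{1/k}\ln n$ makes this at most $n^{-c}$. A union bound over the $n$ vertices and $k-1\le n$ levels then gives that with probability at least $1-n^{2-c}$, every $|B_i(v)|\le c\,\ell^{1/k}\ln n$.

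The one delicate step is the claimed bound $O(\ell^{1/k}\log n)$ for all of $\tB(v)$ rather than $O(k\,\ell^{1/k}\log n)$. This step needs a tighter tail argument than the per-level union bound. We handle it by viewing the scan across all levels as a single sequence of independent trials with success probability $q$. The sum of $k-1$ independent geometric variables with mean $1/q$ is at most $(k-1)/q+O(\log n/q)$ w.h.p., by a standard Chernoff bound for negative binomials. Since $k\le\log n$ may be assumed (otherwise $\ell^{1/k}=O(1)$ changes nothing essential), this total is $O(\ell^{1/k}\log n)$. We expect this step to be the main obstacle.
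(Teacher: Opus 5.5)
Your proposal is correct and takes essentially the same route as the paper. The expectation bound comes from per-level geometric domination with parameter $\ell^{-1/k}$. The step you flag as the main obstacle is exactly the paper's argument: bound $\Pr[|\tB(v)|>s]$ by the tail of a sum of $k-1$ independent geometrics, rewrite it as $\Pr[Y<k]$ for $Y$ binomial with $s=16\,\ell^{1/k}\ln n$ trials, apply Chernoff to get $n^{-2}$, and take a union bound over $v$. Your intermediate per-level union bound is unnecessary, and your assumption $k\le\log n$ plays the same role as the paper's implicit assumption $k\le\log\ell<8\ln n$.
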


The proof is similar to \cite{TZ01,C12}, and is deferred to Appendix~\ref{app:miss}.

 \paragraph{Data structure.} Our path-reporting distance oracle will store in memory the following items:
\begin{enumerate}
\item For every $v\in V$, store in a hash table $\widetilde{B}(v)$, and $p_i(v)$ for all $0\le i\le k-1$. Each vertex is stored along with its distance to $v$.
\item For every $u\in V\setminus A_{k-1}$, store $C(u)$ and the corresponding tree $T(u)$.
\item For every label $\lambda\in L$, store in a hash table $\widetilde{B}(\lambda)$. Every $u\in\widetilde{B}(\lambda)$ is stored with $\lambda_B(u)$
.
\item The collection of pairs ${\cal P}$, and the pairwise distance oracle $O_{\cal P}$.
\end{enumerate}

\subsection{Size Analysis}\label{Size Analysis}

Note that every vertex has $k$ pivots, so by Claim~\ref{claim:bsize}, the total storage for the first item of our data structure, the bunches and pivots, is expected to be of $O(k\cdot n\cdot\ell^{1/k})$ words.

In order to bound the size of clusters stored, we recall that clusters are inverses of bunches, i.e., $C(u)=\{v\in V: u\in B(v)\}$. Furthermore, whenever $u \in V \setminus A_{k-1}$ and $u \in B(v)$, we have that $u\in\widetilde{B}(v)$, so 
\begin{equation}\label{eq:cluster}
    \sum_{u\in V\setminus A_{k-1}}|C(u)|=\sum_{u\in V\setminus A_{k-1}}\sum_{v\in V:u\in \widetilde{B}(v)}1=\sum_{v\in V}|\widetilde{B}(v)|~,
\end{equation}

So the total storage required for the second item, the clusters of vertices outside $A_{k-1}$, is expected to be  $\E\left[\sum_{v\in V}|\widetilde{B}(v)|\right]=O(k\cdot n\cdot\ell^{1/k})$
as well.

Since $\{V_\lambda\}_{\lambda\in L}$ is a partition of $V$, every bunch $\widetilde{B}(v)$ is counted exactly once in one of the $\widetilde{B}(\lambda)$, so the expected size of third item is also bounded by $O(k\cdot n\cdot\ell^{1/k})$ words.

Finally, the size of the pairwise distance oracle $O_{\cal P}$ is at most:
\[O\left(|{\cal P}|\cdot \left(\frac{\log\log n}{\epsilon}\right)^{\log_{4/3}\log n}+n\log\log n\right)\]
if we use Theorem~\ref{thm:prdo23}, and $\tilde{O}\left(|{\cal P}|+n\right)$ if we use Theorem~\ref{thm:prdo24}. It remains to bound the expected size of ${\cal P}$. The first set of pairs, from each vertex to its pivot in $A_{k-1}$, consists of $n$ pairs, so we focus now on the second set of pairs in ${\cal P}$:  $\{(u, \lambda(u)) : u \in A_{k-1}, \; \lambda \in L \}$. By \eqref{eq:ai}, we expect $\frac{n}{\ell^{1-1/k}}$ vertices to be in $A_{k-1}$, each of them adds $\ell$ pairs to ${\cal P}$, one for each label. We get that
\[
\E[|{\cal P|}]=n+\frac{n}{\ell^{1-1/k}}\cdot\ell=O(n\cdot\ell^{1/k})~.
\]
Thus, the expected size of the pairwise distance oracle $O_{\cal P}$ of  Theorem~\ref{thm:prdo23} is at most
\[
O\left( n\cdot\ell^{1/k}\cdot\left(\frac{\log\log n}{\epsilon}\right)^{\log_{4/3}\log n}\right)=O\left( n^{1+o(1)}\cdot\ell^{1/k}\right)~,
\]
for constant $0<\epsilon<1$. Alternatively, it is at most $\tilde{O}(n\cdot\ell^{1/k})$ using Theorem~\ref{thm:prdo24}.

\subsection{Query Algorithm}\label{sec:query}

Let $(v,\lambda)$ be the query, for some $v\in V$ and $\lambda\in L$. The query algorithm consists of two phases. 


\begin{description}
\item [Phase 1:] For every $0\le i\le k-2$, check whether $p_i(v)\in \widetilde{B}(\lambda)$. If so, then let $x_i=\lambda_B(p_i(v))$ be the vertex in $V_\lambda$ that is returned by our data structure when querying $\widetilde{B}(\lambda)$ (recall that this is the closest vertex of label $\lambda$ to $p_i(v)$ such that $p_i(v)\in \widetilde{B}(x_i)$). Define $D_i=\dist(v,p_i(v))+\dist(p_i(v),x_i)$.  Note that both of these distances are stored in the first item of our data structure.

In the case that $p_i(v)\notin \widetilde{B}(\lambda)$, then simply set $D_i=\infty$.

\item [Phase 2:] Extract $x_{k-1}=\lambda(p_{k-1}(v))$ from ${\cal P}$ (recall that this is the closest vertex of label $\lambda$ to $p_{k-1}(v)$), and use the pairwise path-reporting distance oracle $O_{\cal P}$ to (approximately) compute $D_{k-1}=\dist(v,p_{k-1}(v))+\dist(p_{k-1}(v),x_{k-1})$.
\end{description}

\paragraph{Distance.} Let $i^*=\text{argmin}_{0\le i\le k-1}\{D_i\}$, and return $D_{i^*}$ as the approximation to the distance from $v$ to $\lambda(v)$. 

\paragraph{Path-reporting.} If $i^*<k-1$, then let $T=T(p_{i^*}(v))$. Note that $x_{i^*}\in C(p_{i^*}(v))$ and by Claim~\ref{claim:v in B} also $v\in C(p_{i^*}(v))$, so both $x_{i^*}$ and $v$ are in $T$, and we return the path in $T$ from $x_{i^*}$ to $v$ (by going from each of them to the root $p_{i^*}(v)$). Otherwise, if $i^*=k-1$, use the path-reporting oracle $O_{\cal P}$ to return the path from $v$ to $p_{k-1}(v)$ concatenated with the path from $p_{k-1}(v)$ to $x_{k-1}$.

\paragraph{Query time.} Let $P$ be the returned path, then the running time of the query algorithm is $O(k+|P|)$, since in each of the $k$ iterations we check $O(1)$ vertices and distances from the hash tables. If $i^*<k-1$, then the path $P$ in the tree $T$ can be recovered in $O(|P|)$ time.  Otherwise, the oracle $O_{\cal P}$ recovers $P$, also in $O(|P|)$ time. As we state the query time omitting the length of the returned path, we indeed obtain query time $O(k)$.

\subsection{Stretch Analysis}\label{stretch}

In what follows we assume that $A_{k-1}\neq\emptyset$ (which happens w.h.p.), still this is without loss of generality, as otherwise we can simply redefine $k$ as the minimal for which $A_k=\emptyset$.
The following lemma is very similar to \cite[Lemma A.1]{TZ01a}, we give a proof for completeness in Appendix~\ref{app:miss}.

\begin{restatable}{lemma}{fourk}\label{lem:4k-3}
		Fix any $u,v\in V$, and let \(0\le j\le k-1\) be the smallest index such that \(p_j(v)\in B(u)\). Then,
		\[
        \dist(v,p_j(v))+\dist(u,p_j(v))\leq (4k-3)\cd \dist(u,v)
        \]
	\end{restatable}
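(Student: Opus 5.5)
We follow the one-sided argument of \cite{TZ01a}. Write $d=\dist(u,v)$ and $\alpha_i=\dist(v,p_i(v))$. The goal is to show $\alpha_j\le 2j\cdot d$ by induction. Once this is in hand, the triangle inequality gives
\[
\dist(v,p_j(v))+\dist(u,p_j(v))\le 2\alpha_j+d\le (4j+1)d\le (4k-3)d~,
\]
using $j\le k-1$. Note that $j$ is well defined: $p_{k-1}(v)\in A_{k-1}=B_{k-1}(u)\subseteq B(u)$, so such an index exists and $j\le k-1$.

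The inductive claim is that for every $i\le j$ we have $\alpha_i\le 2i\cdot d$. The base case is $\alpha_0=0$, since $p_0(v)=v$. For the step, take $i<j$. By minimality of $j$, $p_i(v)\notin B(u)$, and in particular $p_i(v)\notin B_i(u)$. Since $p_i(v)\in A_i$, the definition of $B_i(u)$ gives
\[
\dist(u,A_{i+1})\le\dist(u,p_i(v))\le \alpha_i+d~.
\]
Then
\[
\alpha_{i+1}=\dist(v,A_{i+1})\le d+\dist(u,A_{i+1})\le \alpha_i+2d\le 2(i+1)d~.
\]
This is an alternation-free version of the standard argument: only $v$'s pivots are tested, which is why each step costs $2d$ rather than $d$.

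The only subtle point is the tie-breaking convention in Footnote~\ref{foot:note}. The inequality $\dist(u,A_{i+1})\le\dist(u,p_i(v))$ follows from $p_i(v)\notin B_i(u)$ directly via the strict-inequality definition of bunches, and does not depend on how pivots are chosen. So the convention does not interfere, and the remaining steps are routine triangle inequalities.
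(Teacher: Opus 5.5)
Your proof is correct and follows essentially the same route as the paper. Both arguments prove by induction that $\dist(v,p_i(v))\le 2i\cdot\dist(u,v)$ for all $i\le j$, using $p_i(v)\notin B_i(u)$ in each step, and then apply the triangle inequality to obtain $(4j+1)\cdot\dist(u,v)\le(4k-3)\cdot\dist(u,v)$. The only cosmetic difference is that you write the inductive step with the set distance $\dist(u,A_{i+1})$, whereas the paper goes through the pivot $p_{i+1}(u)$.
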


Let $(v,\lambda)$ be the given query, set $u=\lambda(v)$ the closest vertex to $v$ that has label $\lambda$. We are now ready to analyze the stretch of the returned path, according to the following two cases. We stress that these cases are not analogous to the two phases of the algorithm, since in the analysis here we consider $\widetilde{B}(u)$, and $u$ is not known to the query algorithm (which considers $\widetilde{B}(\lambda)$).
\begin{description}
\item [Case 1:] There exists $0\le i\le k-2$ such that $p_i(v)\in \widetilde{B}(u)$. Denote $i$ as the minimal such index, and let $x_i=\lambda_B(p_i(v))$. Recall that $x_i$ is the closest vertex to $p_i(v)$ that satisfies: 1) has label $\lambda$ , and 2) contains $p_i(v)$ in its bunch $\widetilde{B}(\cdot)$. Since $u$ satisfies both these conditions, we have that
\begin{equation}\label{eq:pixi}
\dist(p_i(v),x_i)\le\dist(p_i(v),u)~.
\end{equation}
Using that $i$ is the minimal such that $p_i(v)\in B(u)$, by Lemma~\ref{lem:4k-3} it follows that 
\[
D_i=\dist(v,p_i(v))+\dist(p_i(v),x_i)\stackrel{\eqref{eq:pixi}}{\le}\dist(v,p_i(v))+\dist(p_i(v),u)\le (4k-3)\cdot \dist(v,u)~.
\]

So by definition, the returned distance $D_{i^*}\le D_i$ is also a $4k-3$ approximation. Note that the returned path in the tree $T=T(p_{i^*}(v))$ incurs no additional stretch.

\item [Case 2:] For all $0\le i\le k-2$, $p_i(v)\notin \widetilde{B}(u)$. Denote $i$ as the minimal index such that $p_i(v)\in B(u)$. (There is such $0\le i\le k-1$ because $B_{k-1}(u)= A_{k-1}\subseteq B(u)$.) As
$p_i(v)\in B(u) \setminus \widetilde{B}(u)$ it follows that $p_i(v)\in A_{k-1}$ i.e., $p_i(v)=p_{k-1}(v)$ (see footnote~\ref{foot:note}).
 Recall that in phase 2 the algorithm sets $x_{k-1}=\lambda(p_{k-1}(v))$, the closest vertex of label $\lambda$ to $p_{k-1}(v)$, so we have that 
\begin{equation}\label{eq:pixis}
\dist(p_{k-1}(v),x_{k-1})\le\dist(p_{k-1}(v),u)~.
\end{equation}
Once again using Lemma~\ref{lem:4k-3} we get
\begin{eqnarray*}
			D_{k-1} &=& \dist(v,p_{k-1}(v))+\dist(p_{k-1}(v),x_{k-1})\\
			&\stackrel{\eqref{eq:pixis}}{\le}& \dist(v,p_{k-1}(v))+\dist(p_{k-1}(v),u)\\
			&\le&(4k-3)\cdot\dist(v,u)~.
\end{eqnarray*}
So in this case as well, the distance $D_{i^*}\le D_{k-1}$ is also a $4k-3$ approximation. However, in this case the returned path is using the path-reporting pairwise distance oracle $O_{\cal P}$, which incurs an additional $1+\epsilon$ factor to the stretch if we used Theorem~\ref{thm:prdo23} for its construction, or an $O(1)$ factor if we used Theorem~\ref{thm:prdo24}.

\end{description}

\subsubsection{Improving the Stretch}

To obtain the slightly improved stretch $4k-5$, we take a similar approach to \cite{TZ01a}. The basic idea is to make a two-sided test in the first level, and one-sided tests in all other levels.\footnote{In the context \cite{TZ01a} were interested in, compact routing schemes, this amounts to storing the clusters of vertices in the first level $A_0\setminus A_1$, and ensuring they are sufficiently small.} To implement this idea in our setting, we define the cluster of a label $\lambda\in L$ by
\[
C(\lambda)=\bigcup_{u\in V_\lambda\setminus A_1}C(u)~.
\]
Note that we take a union only over first-level vertices in $A_0\setminus A_1$. Therefore, by \eqref{eq:cluster}, we can store every $C(\lambda)$ in a hash table with expected space only $O(k\cdot n\cdot \ell^{1/k})$. Every vertex $v\in C(\lambda)$ is stored together with $u=\lambda_C(v)$, which is the closest vertex to $v$ that satisfies $u\in V_\lambda\setminus A_1$ and $v\in C(u)$. In other words, this is the closest vertex to $v$ that caused $v$ to enter $C(\lambda)$.

The query algorithm will start by testing whether $v\in C(\lambda)$, and if so, it will return the path from $u=\lambda_C(v)$ to $v$ in $T(u)$ (note that as $u\notin A_{k-1}$, this tree is indeed stored in our data structure). If $v\notin C(\lambda)$, continue with the query algorithm described in Section~\ref{sec:query}. The following claim ensures that we get stretch 1 if the query algorithm finds $v\in C(\lambda)$.
\begin{claim}
If $v\in C(\lambda)$ and $u=\lambda_C(v)$, then $u=\lambda(v)$.
\end{claim}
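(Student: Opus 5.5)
The plan is a short contradiction argument that relies only on the definitions of clusters and bunches for first-level vertices. Set $w=\lambda(v)$, so $\dist(v,w)=\dist(v,V_\lambda)\le\dist(v,u)$ because $u\in V_\lambda$. Since $u\in V_\lambda\setminus A_1$ and $v\in C(u)$, the cluster definition with $i=0$ gives
\[
\dist(v,u)<\dist(v,A_1).
\]
It therefore suffices to show that $w$ itself is a candidate for $\lambda_C(v)$, i.e., $w\in V_\lambda\setminus A_1$ and $v\in C(w)$. Minimality of $u$ then gives $\dist(v,u)\le \dist(v,w)$, so $u$ is also a nearest vertex of label $\lambda$ to $v$.

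First I would show $w\notin A_1$. We have $\dist(v,w)\le\dist(v,u)<\dist(v,A_1)$. If $w$ were in $A_1$, then $\dist(v,A_1)\le \dist(v,w)$, which is a contradiction. Hence $w\in A_0\setminus A_1$, so $C(w)=\{y:\dist(w,y)<\dist(y,A_1)\}$. Second, $v\in C(w)$ follows from the same chain:
\[
\dist(w,v)\le\dist(u,v)<\dist(v,A_1).
\]
So $w$ satisfies both conditions defining $\lambda_C(v)$, which gives $\dist(v,u)\le\dist(v,w)=\dist(v,V_\lambda)$. Together with the reverse inequality, $\dist(v,u)=\dist(v,\lambda)$.

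The only delicate point is tie-breaking. Equality of distances is what the stretch-1 guarantee needs, so $u$ can be taken as $\lambda(v)$. To claim literal equality of vertices, I would fix a single consistent tie-breaking rule for both $\lambda(\cdot)$ and $\lambda_C(\cdot)$, for example by vertex identifier. The argument above shows that the two minimizations range over candidate sets containing the same minimum-distance vertices: every nearest label-$\lambda$ vertex $w$ lies in the candidate set of $\lambda_C(v)$. So under a common rule both select the same vertex. I expect this tie-handling remark to be the only step needing care; the rest is the two-line inequality chain.
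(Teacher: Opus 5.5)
Your proof is correct and follows the paper's argument. The paper rules out $\lambda(v)\in A_1$ via $\dist(v,u)<\dist(v,p_1(v))\le\dist(v,\lambda(v))$ and then notes that $\lambda(v)$ is itself a valid candidate for $\lambda_C(v)$, phrased as a contradiction with a case split rather than directly. Your handling of ties (showing the two argmin sets coincide, so a common tie-breaking rule picks the same vertex) is slightly more careful than the paper, which tacitly treats ``$\lambda(v)$ is closer to $v$ than $u$'' as a strict inequality.
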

\begin{proof}
Since $u\in A_0\setminus A_1$ and $v\in C(u)$, by definition it follows that $u\in B_0(v)$. Seeking contradiction, suppose that $u'=\lambda(v)$ is different than $u$. If $u'\in A_1$, then by definition of $B_0$, and of a pivot, we get that
\[
\dist(v,u)<\dist(v,p_1(v))\le\dist(v,u')~,
\]
contradiction. Otherwise, $u'\in A_0\setminus A_1$, but then $u'\in V_\lambda\setminus A_1$ is closer to $v$ than $u$, so $u'\in B_0(v)$ as well, contradicting the definition of $u=\lambda_C(v)$ as the closest such vertex.
\end{proof}
Next, we show an improved bound on the distance from $v$ to its first pivot in the case that $v\notin C(\lambda)$.
\begin{claim}\label{claim:C1}
Suppose that $v\notin C(\lambda)$, then $\dist(v,p_1(v))\le\dist(v,\lambda(v))$.
\end{claim}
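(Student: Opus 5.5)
We argue by contradiction, much as in the preceding claim. Let $u=\lambda(v)$ and suppose that $\dist(v,u)<\dist(v,p_1(v))=\dist(v,A_1)$. We will show that then $v\in C(\lambda)$.

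\emph{Step 1: $u\notin A_1$.} If $u\in A_1$, the definition of the pivot $p_1(v)$ gives
\[
\dist(v,p_1(v))=\dist(v,A_1)\le\dist(v,u),
\]
which contradicts our assumption. Hence $u\in A_0\setminus A_1$.

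\emph{Step 2: $v\in C(u)$.} Since $u\in A_0\setminus A_1$, the cluster definition with $i=0$ reads
\[
C(u)=\{w\in V:\dist(u,w)<\dist(w,A_1)\}.
\]
Our assumption is exactly $\dist(u,v)<\dist(v,A_1)$, so $v\in C(u)$.

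\emph{Step 3: conclusion.} The vertex $u$ has label $\lambda$, so $u\in V_\lambda\setminus A_1$. Therefore
\[
v\in C(u)\subseteq\bigcup_{u'\in V_\lambda\setminus A_1}C(u')=C(\lambda),
\]
contradicting the hypothesis $v\notin C(\lambda)$. We conclude that $\dist(v,p_1(v))\le\dist(v,\lambda(v))$.

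No step is a real obstacle; the argument is a direct unpacking of definitions. The only point needing care is strictness. The pivot inequality $\dist(v,A_1)\le\dist(v,u)$ is weak, and cluster membership needs a strict inequality. Both line up correctly once the negation of the claim is taken as the strict inequality $\dist(v,u)<\dist(v,p_1(v))$.

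One edge case: if $A_1=\emptyset$, then $\dist(v,A_1)=\infty$ and the claim is trivial. In that case the reasoning in Steps 2 and 3 still applies verbatim, because the strict inequality $\dist(v,u)<\infty$ always holds, so every $v$ lies in $C(\lambda)$ and the hypothesis cannot occur.
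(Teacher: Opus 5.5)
Your proof is correct and is essentially the paper's argument: the same case split on whether $u=\lambda(v)$ lies in $A_1$, with the case $u\in A_0\setminus A_1$ settled because $v\notin C(u)$. The only differences are presentational: you argue by contradiction and read off $v\in C(u)$ directly from the cluster definition, whereas the paper argues directly and passes through $u\notin B_0(v)$.
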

\begin{proof}
Let $u=\lambda(v)$. If $u\in A_1$, then by definition of a pivot, we have that $\dist(v,u)\ge \dist(v,p_1(v))$. Otherwise, $u\in A_0\setminus A_1$, but by the assertion of the lemma, $v\notin C(u)$, so $u\notin B_0(v)$, which implies that $\dist(v,u)\ge \dist(v,p_1(v))$.
\end{proof}

Recall that in the stretch analysis, \eqref{eq:vpi} for $i=1$ gave us the bound $\dist(v,p_1(v))\le 2\Delta$ (where $\Delta=\dist(v,\lambda(v))$). Now, with the improved bound of $\dist(v,p_1(v))\le \Delta$ from Claim~\ref{claim:C1}, the stretch will be smaller by an additive 2 (see \cite[Lemma A.2]{TZ01a}), resulting in stretch $4k-5$ instead of $4k-3$.

\section{Distance Oracles for Labeled Graphs with $2k-1$ Stretch}
	In this section we prove the following theorem.
	\begin{theorem}\label{thm:2k-1}
      For any $n$-vertex graph $G=(V,E)$ labeled by a set $L$ of $\ell$ labels, and any integer $k\ge 1$, there exists a distance oracle that can answer any vertex-label query in time \( O(\ell^{1/k}\cdot\log n) \), with stretch $2k-1$ and size $O(k\cdot n \cd \ell^{1/k})$. 
	\end{theorem}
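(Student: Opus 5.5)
We reuse the sampling $V=A_0\supseteq A_1\supseteq\dots\supseteq A_{k-1}\supseteq A_k=\emptyset$ with probability $\ell^{-1/k}$, the pivots $p_i(\cdot)$ and the bunches $B_i(\cdot)$ of Section~\ref{sec:PathReporting}. The oracle stores four things:
\begin{itemize}
\item for every $v$, its pivots and $\widetilde{B}(v)$ in a hash table, each with its distance to $v$;
\item for every label $\lambda$, the hash table $\widetilde{B}(\lambda)$ with the pointers $\lambda_B(\cdot)$;
\item for every $u\in A_{k-1}$ and $\lambda\in L$, the value $\dist(u,\lambda)$, at total expected cost $\E[|A_{k-1}|]\cdot\ell=n\ell^{1/k}$;
\item for every $\lambda$ and $0\le i\le k-2$, the set $P_i(\lambda)=\{p_i(x):x\in V_\lambda\}$ in a hash table. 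Each $w\in P_i(\lambda)$ is stored with $\min\{\dist(x,w): x\in V_\lambda,\ p_i(x)=w\}$.
\end{itemize}
Every vertex contributes $O(k)$ pivots in total, so the size is $O(kn\ell^{1/k})$ by Claim~\ref{claim:bsize}.

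\textbf{Query.} Given $(v,\lambda)$, we run the Thorup--Zwick alternating search symmetrically. For each level $i\le k-2$ we do two checks:
\begin{itemize}
\item \emph{one-sided test:} whether $p_i(v)\in\widetilde{B}(\lambda)$, giving the candidate $\dist(v,p_i(v))+\dist(p_i(v),\lambda_B(p_i(v)))$;
\item \emph{two-sided test:} for every $w\in B_i(v)$, whether $w\in P_i(\lambda)$, giving the candidate $\dist(v,w)+\min_{x:p_i(x)=w}\dist(x,w)$.
\end{itemize}
At level $k-1$, for $w=p_{k-1}(v)$, we take the candidate $\dist(v,w)+\dist(w,\lambda)$ from the stored table. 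We return the minimum over all candidates. Every candidate is the length of a real walk from $v$ to a vertex of label $\lambda$, so no candidate underestimates. The dominant cost is scanning $\widetilde{B}(v)$, which is $O(\ell^{1/k}\log n)$ by the high-probability part of Claim~\ref{claim:bsize}. For a guaranteed bound, we resample or truncate at construction time until all bunches are that small.

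\textbf{Stretch.} Let $u=\lambda(v)$ and $\Delta=\dist(v,u)$. We follow the TZ01 argument. Alternate $a_i\in\{v,u\}$ and $b_i$ the other vertex, and let $w_i=p_i(a_i)$. Let $j$ be the first level at which $w_j\in B(b_j)$; the standard induction gives $\dist(a_j,w_j)\le j\Delta$.
\begin{itemize}
\item If $j\le k-2$ and $a_j=v$: we have $p_j(v)\in\widetilde{B}(u)$, so the one-sided test finds a candidate of value at most $\dist(v,w_j)+\dist(w_j,u)\le(2j+1)\Delta$.
\item If $j\le k-2$ and $a_j=u$: then $w_j=p_j(u)\in B_j(v)$ and $w_j\in P_j(\lambda)$, so the two-sided test finds a candidate of value at most $\dist(v,w_j)+\dist(u,w_j)\le(2j+1)\Delta$.
\item If $j=k-1$: the pivot $p_{k-1}(a_{k-1})$ lies in $A_{k-1}=B_{k-1}(\cdot)$.
\end{itemize}

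\textbf{Main obstacle: the last level.} When $a_{k-1}=v$ we are fine, since the level-$(k-1)$ candidate is at most $\dist(v,w)+\dist(w,u)\le(2k-1)\Delta$. When $a_{k-1}=u$, the pivot $p_{k-1}(u)$ is not accessible from the query. The plan is a parity adjustment. If $a_{k-1}=u$, then $a_{k-2}=v$, and the induction gives $\dist(v,p_{k-1}(v))\le\dist(v,p_{k-2}(v))+\Delta$ whenever $p_{k-2}(v)\notin B(u)$. This yields
\[
\dist(v,p_{k-1}(v))\le(k-1)\Delta,
\]
and the level-$(k-1)$ candidate is then at most $2\dist(v,p_{k-1}(v))+\Delta\le(2k-1)\Delta$.

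To make the induction work, I would choose the starting side, $a_0=u$ or $a_0=v$ depending on the parity of $k$, so that level $k-1$ always falls on $v$. Both orientations are available, because at every level $i\le k-2$ the query tests both directions. I expect the delicate point to be the degenerate tie cases, where a pivot $p_i(\cdot)$ already lies in $A_{k-1}$ for some $i<k-1$. These need the tie-breaking rule of footnote~\ref{foot:note}, which gives $p_i=p_{k-1}$, and they are handled by the same table.
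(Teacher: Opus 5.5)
Your data structure and query are the same as the paper's; your scan of $B_i(v)$ against $P_i(\lambda)$ is exactly its Phase~2, since $B_i(v)\subseteq A_i\setminus A_{i+1}$. Your stretch analysis takes a different route, and it works.

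\emph{Comparison.} The paper proves a simultaneous two-sided induction (Lemma~\ref{lem:2k-1}). Up to the first level $i$ at which \emph{either} $p_i(v)\in B(u)$ or $p_i(u)\in B(v)$, it shows both $\dist(v,p_j(v))\le j\Delta$ and $\dist(u,p_j(u))\le j\Delta$. It then splits on $i_v\le i_u$ versus $i_u<i_v$, and on whether the relevant pivot lies in $A_{k-1}$. You keep the classical single alternating chain. Because both directions are tested at every level $i\le k-2$, you may fix its phase so that $a_{k-1}=v$. The only step the query cannot realize (going through $u$'s last pivot) then never ends the chain. The no-early-success case gives $\dist(v,p_{k-1}(v))\le(k-1)\Delta$, hence stretch $2k-1$ from the stored table. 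Your route reuses the textbook induction at the cost of a parity choice; the paper's symmetric lemma bounds both sides at once and needs no such choice.

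\emph{A wrong sentence to drop.} The sentence before your parity fix is incorrect. The hypothesis $p_{k-2}(v)\notin B(u)$ bounds $\dist(u,p_{k-1}(u))$, not $\dist(v,p_{k-1}(v))$. Routing through $p_{k-1}(u)$ only gives $\dist(v,p_{k-1}(v))\le k\Delta$, i.e.\ stretch $2k+1$. The parity choice alone settles the last level, so you do not need that sentence.

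\emph{Points to write out rather than assert.}
\begin{itemize}
\item Tie case $a_j=u$, $j\le k-2$, $p_j(u)\in A_{k-1}$. Here the tie-breaking rule does not help: it gives $p_j(u)=p_{k-1}(u)$, which the query never sees. What you need is minimality of $v$'s pivot: $\dist(v,p_{k-1}(v))\le\dist(v,p_j(u))\le(j+1)\Delta$. The table candidate is then at most $(2j+3)\Delta$, which is $\le(2k-1)\Delta$ only because $j\le k-2$. This is the paper's Case~4, and the constant differs from your other cases.
\item Case $a_j=u$ with $p_j(u)\notin A_{k-1}$. Under ties, $p_j(u)$ need not lie in $B_j(v)$. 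It lies in $B_{i'}(v)$, where $p_j(u)\in A_{i'}\setminus A_{i'+1}$ and $j\le i'\le k-2$. It is caught at level $i'$, because consistent tie-breaking gives $p_{i'}(u)=p_j(u)\in P_{i'}(\lambda)$, as in the paper's Case~2.
\item Truncating bunches at construction time would break correctness. Resampling, or accepting the w.h.p.\ bound as the paper does, is the right fix.
\end{itemize}
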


\begin{remark}
For any $0<\epsilon<1$, the oracle of Theorem~\ref{thm:2k-1} can be made path-reporting as well, at the cost of increasing the stretch to $(2k-1)\cdot(1+\epsilon)$, and the size to 
\[O\left(\left(\frac{\log\log n}{\epsilon}\right)^{O(\log\log n)}\cdot n \cd \ell^{1/k}\right)\]
\end{remark}

	\subsection{Preprocessing}
	We will use a similar data structure as in Section~\ref{sec:PathReporting}, excluding the pairwise distance oracle $O_{\cal P}$ and the clusters. Additionally, for every $0\leq i\leq k-2$ and every $\lambda\in L$, we will store a set containing all the $i$-th level pivots of vertices with label $\lambda$. Formally,
	\[P_i (\lambda)=\{p_i (v): v\in V_\lambda\}  \]
	For every
	 $y\in P_i(\lambda)$,
	  let $\lambda_{P_i} (y)=\text{argmin}_{x\in V_\lambda: p_i(x)=y} \{\dist(x,y)\}$,
	  	 in other words, this is the closest vertex $x$ to $y$ that caused $y$ to enter $P_i (\lambda)$; so $x$ has label $\lambda$ and also satisfies $p_i(x)=y$.

We will also store for all vertices in the last level $A_{k-1}$, the distances to the closest vertex of label $\lambda$, for every $\lambda\in L$.
         
\paragraph{Data structure.} Our distance oracle will store in memory the following items:
	\begin{enumerate}
		\item For every $v\in V$, store in a hash table $\widetilde{B}(v)$, and $p_i(v)$ for all $0\le i\le k-1$. Each vertex is stored along with its distance to $v$.
		\item For every label $\lambda\in L$, store in a hash table $\widetilde{B}(\lambda)$. Every $u\in\widetilde{B}(\lambda)$ is stored with $\lambda_B(u)$ and the distance between them.
		\item \label{item:forth} For every label $\lambda\in L$ and for every $0\leq i\leq k-2$ , store in a hash table $P_i (\lambda)$.
		 Every $y\in P_i (\lambda)$ is stored with $\lambda_{P_i} (y)$ and the distance between them.
         \item  For every vertex $v\in A_{k-1}$ and every label $\lambda\in L$, store $\lambda(v)$ and the distance $\dist(v,\lambda(v))$.
	\end{enumerate}

	\subsection{Size Analysis}
	In section \ref{Size Analysis} it was shown that the total expected storage required for the first two items is  $O(k\cdot n\cdot\ell^{1/k})$ words. We show next that the third item requires only $O(k\cdot n)$ space. To see this, note that for any $0\leq i\leq k-2$ and $\lambda\in L$, it holds that $|P_i(\lambda)| \leq |V_\lambda|$ (since every vertex in $V_\lambda$ adds one of its pivots to this set). Therefore, \[\sum_{i=0}^{k-2}\sum_{\lambda\in L}|P_i(\lambda)| \leq k\cdot \sum_{\lambda\in L}|V_\lambda|=k\cdot n~.\]
    For the last item, recall that by \eqref{eq:ai} we have $\E[|A_{k-1}|]=\frac{n}{\ell^{1-1/k}}$. As every vertex in $A_{k-1}$ stores $O(\ell)$ words, the expected space required for the last item is $O(n\cdot\ell^{1/k})$.
    
    Thus, the total expected size of our data structure is $O(k\cdot n\cdot\ell^{1/k})$ words.
    
	\subsection{Query Algorithm}
	
	Let $(v,\lambda)$ be the query, for some $v\in V$ and $\lambda\in L$. The query algorithm consists of three phases. 
	Denote $\tB(v)=\{v_1,v_2,\dots,v_{|\tB(v)|}\}$.
	
	\begin{description}
		\item [Phase 1:] For every $0\le i\le k-2$, check whether $p_i(v)\in \widetilde{B}(\lambda)$. If so, then let $x_i=\lambda_B(p_i(v))$ be the vertex in $V_\lambda$ that is returned by our data structure when querying $\widetilde{B}(\lambda)$ (recall that this is the closest vertex of label $\lambda$ to $p_i(v)$ such that $p_i(v)\in \widetilde{B}(x_i)$). Define $D_i=\dist(v,p_i(v))+\dist(p_i(v),x_i)$. Note that both of these distances are stored in the first item of the data structure.
		
		In the case that $p_i(v)\notin \widetilde{B}(\lambda)$, then simply set $D_i=\infty$.
		
		\item [Phase 2:] For every $1\leq j \leq |\tB(v)|$, let $i=i(j)$ be the unique index such that $v_j\in A_i\setminus A_{i+1}$, noting that $0\le i\le k-2$ because $\tB(v)$ does not contain vertices in $A_{k-1}$. Then, check whether $v_j\in P_i(\lambda)$. If so, then let $y_j=\lambda_{P_i} (v_j)$ be the vertex in $V_\lambda$ that is returned by our data structure when querying $P_i(\lambda)$ (recall that this is the closest vertex of label $\lambda$ to $v_j$ such that $v_j=p_i(y_j)$). Define $E_j=\dist(v,v_j)+\dist(v_j,y_j)$. Again, both of these distances are stored in the first item of the data structure.
        
		In the case that $v_j\notin P_i(\lambda)$, then simply set $E_j=\infty$.
		
		\item [Phase 3:] Extract $x_{k-1}=\lambda(p_{k-1}(v))$ from the data structure (recall that this is the closest vertex of label $\lambda$ to $p_{k-1}(v)$). Compute $D_{k-1}=\dist(v,p_{k-1}(v))+\dist(p_{k-1}(v),x_{k-1})$. These distances are stored in the first (respectively, fourth) item of the data structure. 
        
	\end{description}
	
\paragraph{Distance.} Let $i^*=\text{argmin}_{0\le i\le k-1}\{D_i\}$, and $j^*=\text{argmin}_{1\le j\le |\tB(v)|}\{E_j\}$. 
   We return $\min\{D_{i^*},E_{j^*}\}$ as the approximation to the distance from $v$ to $\lambda(v)$.

\paragraph{Query time.} The running time of the query algorithm is $O(k)$ for phase 1, $O(|\tB(v)|)$ for phase 2, and $O(1)$ for phase 3, so in total it is $O(k+|\tB(v)|)$.
	By Claim~\ref{claim:bsize}, with high probability for every $v\in V$ we have $|\tB(v)|\leq O(\ell ^{1/k}\cdot\log n)$. As $k\le\log\ell\le\log n$, we obtain query time $O(\ell ^{1/k}\cdot\log n)$.

\subsubsection{Path-Reporting}

The query algorithm described above can be converted to a path-reporting one, by incorporating the technique of Section~\ref{sec:PathReporting}. To this end, we will add to the data structure the following items,  for a given a parameter $0<\epsilon<1$.
\begin{itemize}
\item For every $u\in V\setminus A_{k-1}$, store $C(u)$ and the corresponding tree $T(u)$.
\item The collection of pairs ${\cal P}$, and the pairwise distance oracle $O_{\cal P}$.
\end{itemize}

As proved in Section~\ref{Size Analysis}, this will increase the size of our data structure by a factor of $\left(\frac{\log\log n}{\epsilon}\right)^{O(\log\log n)}$. Note that the distance returned in phase 1 or 2 of the query algorithm is the sum of distances from a pivot $p\in V\setminus A_{k-1}$, to two vertices in $C(p)$. Therefore, the tree $T(p)$ which is stored in our data structure, contains the corresponding paths, and they can be recovered in linear time in the length of the paths. Finally, if the distance returned is from phase 3, we can use $O_{\cal P}$ to report the paths, increasing the stretch by a factor of $1+\epsilon$.

\subsection{Stretch Analysis}
	
As in section~\ref{stretch}, we assume that $A_{k-1}\neq\emptyset$.
The following lemma is similar to \cite[Lemma 3.3]{TZ01}, we give a proof for completeness in Appendix~\ref{app:miss}.

\begin{restatable}{lemma}{twok}\label{lem:2k-1}
    Fix any $u,v\in V$, and let $0\le i\le k-1$ be the smallest index such that \(p_i(v)\in B(u)\) or \(p_i(u)\in B(v))\) . Then:
    \[dist(v,p_i(v))+dist(p_i(v),u)\leq (2i+1)\cd dist(u,v)\]
    \end{restatable}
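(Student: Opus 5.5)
The plan is the standard Thorup--Zwick alternating argument, made symmetric in $u$ and $v$. The minimal index $i$ may be reached through either test ($p_i(v)\in B(u)$ or $p_i(u)\in B(v)$), so I will carry a symmetric invariant.

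Let $\Delta=\dist(u,v)$. First note that $i$ is well defined, since $p_{k-1}(v)\in A_{k-1}=B_{k-1}(u)\subseteq B(u)$. I will prove by induction on $j$, for every $0\le j\le i$, that
\[
\dist(v,p_j(v))\le j\cdot\Delta \quad\text{and}\quad \dist(u,p_j(u))\le j\cdot\Delta~.
\]
The base case $j=0$ is immediate, since $p_0(v)=v$ and $p_0(u)=u$ (as $A_0=V$), so both distances are $0$.

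For the inductive step, take $j<i$ and assume the invariant at $j$. By minimality of $i$ we have $p_j(v)\notin B(u)$, and in particular $p_j(v)\notin B_j(u)$. Since $p_j(v)\in A_j$, the definition of $B_j(u)$ gives $\dist(u,p_j(v))\ge \dist(u,A_{j+1})=\dist(u,p_{j+1}(u))$. The triangle inequality and the hypothesis then give
\[
\dist(u,p_{j+1}(u))\le \dist(u,p_j(v))\le \dist(u,v)+\dist(v,p_j(v))\le (j+1)\Delta~.
\]
Symmetrically, $p_j(u)\notin B(v)$ gives $\dist(v,p_{j+1}(v))\le \dist(v,p_j(u))\le \Delta+j\Delta=(j+1)\Delta$. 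This completes the induction. Only the level-$j$ bunch condition is used, so it does not matter that $B(u)$ also contains higher levels.

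Finally, at index $i$ the invariant gives $\dist(v,p_i(v))\le i\Delta$. The triangle inequality gives $\dist(p_i(v),u)\le \dist(p_i(v),v)+\Delta\le (i+1)\Delta$. Summing yields $(2i+1)\Delta$, as claimed.

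The only delicate point is why the symmetric invariant is needed. The statement bounds $\dist(v,p_i(v))$ even when $i$ is triggered only by $p_i(u)\in B(v)$. The one-sided version, which alternates roles of $u$ and $v$ by parity as in \cite{TZ01}, would only control one of the two pivots at level $i$. Keeping both bounds at every level avoids this. It costs nothing, because at every $j<i$ both bunch-membership tests fail, so both updates are available.
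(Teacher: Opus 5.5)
Your proof is correct and is essentially the paper's own argument: the same symmetric induction showing $\dist(v,p_j(v))\le j\Delta$ and $\dist(u,p_j(u))\le j\Delta$ for all $j\le i$, using that both bunch tests fail below $i$, followed by the triangle inequality $\dist(p_i(v),u)\le \dist(p_i(v),v)+\Delta$ at level $i$. Your additions, that non-membership in $B_j(u)$ alone suffices and that $i$ is well defined because $A_{k-1}\neq\emptyset$ (which the paper assumes), are harmless clarifications.
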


	Let $(v,\lambda)$ be the given query, set $u=\lambda(v)$ the closest vertex to $v$ that has label $\lambda$. Denote $i_u=\text{argmin}_{0 \leq i \leq k-1}\{p_i(u)\in B(v)\}$ as the minimal index $i$ such that $B(v)$ contains the $i$-th pivot of $u$ (note that $i_u$ is well defined, since $p_{k-1}(u)\in A_{k-1}\subseteq B(v)$). Likewise, let $i_v=\text{argmin}_{0 \leq i \leq k-1}\{p_i(v)\in B(u)\}$. We are now ready to analyze the stretch of the returned distance, according to the following cases. As before, these cases are not analogous to the phases of the algorithm.

	\begin{description}
		\item [Case 1:] $i_v\leq i_u$ and $p_{i_v}(v)\notin A_{k-1}$. Let $i=i_v$, and note that
		$p_i(v)\in \tB(u)$.
		In phase 1. of the query, we take $x_i$ which is the closest vertex to $p_i(v)$ that satisfies: 1) has label $\lambda$ , and 2) contains $p_i(v)$ in its bunch $\widetilde{B}(\cdot)$. Since $u$ satisfies both these conditions, we have that
		\begin{equation}\label{eq:pixi2}
			\dist(p_i(v),x_i)\le\dist(p_i(v),u)~.
		\end{equation}
		By the minimality of $i=i_v\le i_u$, 
		we may use Lemma~\ref{lem:2k-1} and get that 
		\[
		D_i=\dist(v,p_i(v))+\dist(p_i(v),x_i)\stackrel{\eqref{eq:pixi2}}{\le}\dist(v,p_i(v))+\dist(p_i(v),u)\le (2i+1)\cdot \dist(v,u)~,
		\]
        and as $i\le k-1$ we conclude that
        \[
        D_i\le (2k-1)\cd\dist(v,u)~.
        \]
        
		\item [Case 2:] $i_u< i_v$ and $p_{i_u}(u)\notin A_{k-1}$.
        Let $i=i_u$, and note that there exists $v_j\in \tB(v)$ such that $v_j=p_i(u)$. Let $i'$ be the unique index such that $v_j\in A_{i'}\setminus A_{i'+1}$ (observe that $i\le i'\le k-2$, since $v_j\in A_i$ but $v_j\notin A_{k-1}$). In phase 2. of the query we take $y_j=\lambda_{P_{i'}}(v_j)$, which is the closest vertex to $v_j$ that satisfies: 1) has label $\lambda$ , and 2) $v_j$ is its level $i'$ pivot. We would like to show that $u$ satisfies both these conditions: clearly it has label $\lambda$, and by our consistent choice of pivots, since $v_j$ is the level $i$ pivot of $u$, it is also its level $i'$ pivot (see Footnote~\footref{foot:note}). 
        We conclude that
		\begin{equation}\label{eq:pixi3}
			\dist(v_j,y_j)\le\dist(v_j,u)~.
		\end{equation}

		By the minimality of $i=i_u<i_v$,  
		we may use Lemma~\ref{lem:2k-1} (switching the roles of $u,v$). Recalling that $v_j=p_i(u)$, we get that 
		\[
		E_j=\dist(v,v_j)+\dist(v_j,y_j)\stackrel{\eqref{eq:pixi3}}{\le} \dist(v,v_j)+\dist(v_j,u)\le (2k-1)\cdot \dist(v,u)~.
		\]

		\item [Case 3:] $i_v\leq i_u$ and $p_{i_v}(v)\in A_{k-1}$. 
        Let $i=i_v$, and note that by our consistent choice of pivots, $p_i(v)=p_{k-1}(v)$.
		Recall that in phase 3., the algorithm sets $x_{k-1}=\lambda(p_{k-1}(v))$, the closest vertex of label $\lambda$ to $p_{k-1}(v)$, so we have that
		\begin{equation}\label{eq:pixis4}
			\dist(p_{k-1}(v),x_{k-1})\le\dist(p_{k-1}(v),u)~.
		\end{equation}
		Again using the minimality of $i=i_v$, by Lemma~\ref{lem:2k-1} we get
		\begin{eqnarray*}
		    D_{k-1}&=&\dist(v,p_{k-1}(v))+\dist(p_{k-1}(v),x_{k-1})\\&\stackrel{\eqref{eq:pixis4}}{\le}& \dist(v,p_{k-1}(v))+\dist(p_{k-1}(v),u)\\		
		&=& \dist(v,p_i(v))+\dist(p_i(v),u)\\
		&\le& (2k-1)\cdot\dist(v,u)~.
		\end{eqnarray*}
		
		\item [Case 4:] $i_u< i_v$ and $p_{i_u}(u)\in A_{k-1}$. Set $i=i_u$ and again by the consistent choice of pivots, $p_i(u)=p_{k-1}(u)$.
		By definition of a pivot,
		\begin{equation}\label{eq:pixis5}
			\dist(v,p_{k-1}(v))\le\dist(v,p_{k-1}(u))=\dist(v,p_i(u))~.
		\end{equation}
        Using the triangle inequality, and the minimality of $i$, we have
		\begin{eqnarray*}
			D_{k-1} &=& \dist(v,p_{k-1}(v))+\dist(p_{k-1}(v),x_{k-1})\\
			&\stackrel{\eqref{eq:pixis4}}{\le}& \dist(v,p_{k-1}(v))+\dist(p_{k-1}(v),u)\\
			&\le&\dist(v,p_{k-1}(v))+\dist(p_{k-1}(v),v)+\dist(v,u)\\
			&\stackrel{\eqref{eq:pixis5}}{\le} & 2\cd \dist(v,p_i(u))+\dist(v,u)\\
			&\le&\dist(v,p_i(u))+\dist(u,p_i(u))+2\cd\dist(v,u)\\
			&\stackrel{\text{Lemma~\ref{lem:2k-1}}}{\le}&(2\cd i+3)\cdot\dist(v,u)~.
\end{eqnarray*}
Finally, note that as $i<i_v\le k-1$, it follows that $i\le k-2$, therefore 
\[
D_{k-1}\leq(2\cd (k-2)+3)\cd\dist(v,u) \\
			= (2k-1)\cdot\dist(v,u)~.
\]

	\end{description}
	In each of these four cases we demonstrated that one of the $D_i$ or $E_j$ is a $2k-1$ approximation to $\dist(v,u)$, so we conclude the same for returned distance $\min\{D_{i^*},E_{j^*}\}$.

	\bibliography{references}

\appendix
\section{Missing Proofs}\label{app:miss}

\subsection{Proof of Claim~\ref{claim:bsize}}

\bunchsize*

\begin{proof}
    Fix any $0\le i\le k-2$. As shown in \cite{TZ01}, $|B_i(v)|$ is stochastically dominated by a geomteric random variable $X_i$ with parameter $p=\ell^{-1/k}$. It follows that 
    \[\E[|\tB(v)|]=\sum_{i=0}^{k-2}\E[|B_i(v)|]\leq\sum_{i=0}^{k-2}\E[X_i]=\sum_{i=0}^{k-2}p^{-1}=(k-1)\cdot \ell^{1/k}~.\]

    Furthermore, the variables $X_i$ are independent, which implies that for any $s>0$, if $Y$ is a binomial random variable with parameters $s$ and $p$, then
    \[
	\Pr[|\tB(v)| > s] = \Pr\left[\sum_{i=0}^{k-2}|B_i(v)| > s\right] \leq \Pr\left[\sum_{i=0}^{k-2} X_i > s\right]=\Pr[Y<k].
	\]
The final equality was observed in \cite{TZ01} (in the first $s$ attempts there were at most $k-1$ successes). Setting \( s = 16\cdot \ell^{1/k} \cdot \ln n \), let $\mu =\E[Y] = s \cdot p=16\ln n$, and apply a Chernoff bound,
	
	\[
	\Pr[Y < (1-\delta) \mu] < \exp(-\mu \delta^2 /2)~,
	\]
	
	 with \( \delta = 1/2 \). Noticing that \( k \leq \log\ell< 8 \cdot \ln n \), it follows that
	
	\[
	\Pr[Y < k] \leq \Pr[Y < 8 \cdot \ln n] \leq \exp \left(-\frac{\mu}{8}\right) = \frac{1}{n^2}.
	\]
	
	Applying the union bound over all vertices \( v \in V \), we get that
	
	\[
	\Pr[\exists v \in V: |\tB(v)| > 16\cd \ell^{1/k} \cdot \ln n] \leq n \cdot \frac{1}{n^2} = \frac{1}{n} ~.
	\]
	
\end{proof}

\subsection{Proof of Lemma~\ref{lem:4k-3}}
\fourk*

	\begin{proof}
		Denote $\Delta=\dist(u,v)$. We will prove by induction on $0\le i\le j$, that
		\begin{equation}\label{eq:vpi}
        \dist(v,p_i(v))\leq 2i\cd \Delta~.
        \end{equation}
		
		For the base case \( i = 0 \), we have \( p_0(v) = v \), so \( \dist(v, p_0(v)) = 0 \), which satisfies \eqref{eq:vpi}.
		
		For the inductive step, assume that \eqref{eq:vpi} holds for \( 0\le i<j \), and we prove for \( i+1 \). By the minimality of $j$, we have \( p_{i}(v) \notin B(u) \), so by definition of a bunch, 
        \begin{equation}\label{eq:bunc}
        \dist(u, p_{i+1}(u)) \leq dist(u, p_{i}(v)) ~.
        \end{equation}
        Since $p_{i+1}(v)$ is the closest vertex to $v$ in $A_{i+1}$, and using the triangle inequality, we have
		\begin{eqnarray*}
		\dist(v, p_{i+1}(v)) &\leq& \dist(v, p_{i+1}(u))\\
        &\leq& \dist(v,u)+\dist(u, p_{i+1}(u))\\
        &\stackrel{\eqref{eq:bunc}}{\le}&\Delta +\dist(u, p_{i}(v))\\
		&\leq& \Delta +\dist(u,v)+\dist(v, p_{i}(v))\\
        &\stackrel{\eqref{eq:vpi}}{\le}&2\Delta+2i\cdot\Delta\\
        &=&(2i+2)\cd \Delta~,
\end{eqnarray*}
which concludes the proof of \eqref{eq:vpi}. Note that by the triangle inequality, for any $0\le i\le j$ we also have that
        \begin{equation}\label{eq:upi}
        \dist(u,p_i(v))\le \dist(u,v)+\dist(v,p_i(v))\stackrel{\eqref{eq:vpi}}{\le}\Delta+2i\cdot\Delta=(2i+1)\cdot\Delta
        \end{equation}
        
		Combining \eqref{eq:vpi} and \eqref{eq:upi}, and using that \( j \leq k-1 \), we conclude that
		\[\dist(v,p_j(v))+\dist(u,p_j(v))\leq (2j+2j+1)\cd \Delta=(4j+1)\cd \Delta\leq (4k-3)\cd \Delta~.\]
		
	\end{proof}

\subsection{Proof of Lemma~\ref{lem:2k-1}}
\twok*

\begin{proof}
		Denote $\Delta=dist(u,v)$.  We prove by induction on $0\le j\le i$ that
		\begin{eqnarray}\label{eq:indv}
		     dist(v,p_j(v))&\leq& j\cd \Delta~.\\
             \label{eq:indu}
		     dist(u,p_j(u))&\leq& j\cd \Delta~.
		\end{eqnarray}

		For the base case \( j = 0 \), we have \( p_0(v) = v \) and \( p_0(u) = u \), so both \( dist(v, p_0(v))=dist(u, p_0(u)) = 0 \), as required.
		For the inductive step, assume \eqref{eq:indv} and \eqref{eq:indu} hold for \( 0\le j<i \), and we prove for $j+1$. 
        By the minimality of $i$, we have that both $p_{j}(v) \notin B(u)$ and $p_{j}(u) \notin B(v)$. By definition of a bunch, it follows that
        \begin{equation}\label{eq:bunu}
             dist(u, p_{j+1}(u)) \leq dist(u, p_{j}(v))~, 
        \end{equation}
        and
        \begin{equation}\label{eq:bunv}
            dist(v, p_{j+1}(v)) \leq dist(v, p_{j}(u))~, 
        \end{equation}
		
        By the induction hypothesis and the triangle inequality, it follows that
        \[
		dist(v, p_{j+1}(v)) \stackrel{\eqref{eq:bunv}}{\le}dist(v, p_{j}(u))\leq \dist(v,u)+ dist(u, p_j(u))\stackrel{\eqref{eq:indu}}{\le}(j+1)\cd \Delta~,
		\] 
        and symmetrically,
\[
		dist(u, p_{j+1}(u)) \stackrel{\eqref{eq:bunu}}{\le}dist(u, p_{j}(v))\leq \dist(u,v)+ dist(v, p_j(v))\stackrel{\eqref{eq:indv}}{\le}(j+1)\cd \Delta~,
		\] 
which completes the proof of \eqref{eq:indv} and \eqref{eq:indu}. Finally, we conclude that

		\[dist(v,p_i(v))+dist(p_i(v),u)\leq dist(v,p_i(v))+dist(p_i(v),v) +\Delta\leq (2i+1)\cd \Delta~.\]

	\end{proof}
    
\end{document}